\documentclass[10pt,conference,letterpaper]{IEEEtran}

\usepackage{amsmath,amssymb,amsthm,bm}
\usepackage{booktabs}
\usepackage{graphicx}
\usepackage{array}
\usepackage{tabularx}
\usepackage{algorithm}
\usepackage{algorithmic}
\usepackage[hidelinks]{hyperref}

\newtheorem{theorem}{Theorem}
\newtheorem{lemma}{Lemma}
\newtheorem{proposition}{Proposition}
\newcommand{\T}{\mathcal{T}}
\newcommand{\M}{\mathcal{M}}
\newcommand{\F}{F}

\begin{document}

\title{PrefixPlace: Provable Prefix Key--Value Placement for Large Language Model Serving under Heterogeneous Compute and Transfer Costs}
\author{
  \IEEEauthorblockN{Zhiyu Wang and Rajkumar Buyya}
  \IEEEauthorblockA{\textit{Quantum Cloud Computing and Distributed Systems (qCLOUDS) Lab} \\
  \textit{School of Computing and Information Systems} \\
  \textit{The University of Melbourne, Australia} \\
  \{zhiyu.wang4, rbuyya\}@unimelb.edu.au}
}
\maketitle

\begin{abstract} 
Prefix Key--Value (KV) reuse avoids repeated prefill in Large Language Model (LLM) inference, but local misses require recomputation or replica fetches. Their relative cost varies with hardware, prefix depth, KV goodput, and replica location, making hit-rate-based placement suboptimal. To address this issue, we propose an epoch-level planner, PrefixPlace, which assigns prefix-complete targets under memory budgets and profiled demand, compute, and transfer costs. The objective decomposes into local-copy value plus first-replica coverage, and source-dependent costs yield a monotone facility-location objective; each worker update is an additive rooted-tree problem solved exactly in $O(nk)$ time for $n$ chunks and capacity $k$, giving a fixed-order $1/2$-approximation that coordinate refinement and order-diverse starts improve without weakening. T4, L4, and A100 measurements reveal distinct regimes. Across 432 instances with exact optima, PrefixPlace averages 99.84\% of optimum and never falls below 98.02\%. In Retrieval-Augmented Generation (RAG) replays, it improves materialization-cost saving by 40.3\% over vLLM Automatic Prefix Caching (vLLM-APC) and 6.3\% over the best offline baseline. On WikiQA, gains are 40.4\% and 5.3\%. Finally, PrefixPlace solves a 50,000-node, 16-worker placement in 12.3 s on one processor, enabling timely replanning. 
\end{abstract}

\begin{IEEEkeywords}
Large Language Model serving, Prefix caching, Combinatorial optimization, Approximation algorithms.
\end{IEEEkeywords}

\section{Introduction}

Prefix reuse avoids repeated prefill when Large Language Model (LLM) requests share system prompts, retrieved documents, few-shot templates, or conversation histories. Modern inference engines reuse the corresponding Key--Value (KV) attention state~\cite{vllm,sglang,promptcache}, and recent designs also move KV state across workers or storage tiers~\cite{cachedattention,cachegen,mooncake,impress2025}. These mechanisms provide reuse and transfer primitives, but leave a separate planning question: \emph{which reusable prefixes should remain resident at each worker?}

This placement decision cannot be reduced to cache hit rate. On a local miss, a worker may either recompute the missing KV state from the token transcript or fetch a compatible replica. Their relative cost changes with accelerator speed, prefix depth, KV footprint, and effective KV goodput. Figure~\ref{fig:hw} isolates this effect using the same Qwen2.5-3B-Instruct model, 16-bit Floating-Point (FP16) representation, prompts, and software stack on three Graphics Processing Units (GPUs). Under a common effective KV goodput of 1.5 Gb/s, fetching the measured 18.9-MB KV state for a 512-token chunk takes about 100.7 ms. Recomputation is already slower on T4, crosses the transfer cost near 3.8K existing tokens on L4, and remains faster throughout the measured A100 range. Thus, the same model and transfer condition can require opposite decisions solely because requester hardware changes.

\begin{figure}[t]
\centering
\includegraphics[width=.80\linewidth]{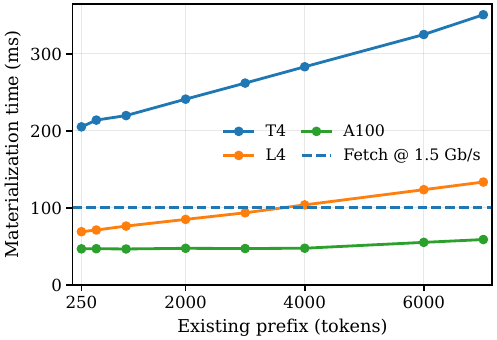}
\caption{Measured time to materialize a 512-token missing chunk of Qwen2.5-3B in 16-bit Floating-Point (FP16) after an existing prefix. The dashed line is the fetch time implied by the measured 18.9-MB KV footprint and 1.5-Gb/s effective KV goodput. T4 favors fetching, L4 crosses near 3.8K tokens, and A100 favors recomputation over the measured range.}
\label{fig:hw}
\end{figure}

To address this problem, we develop \emph{PrefixPlace}, an epoch-level planner that assigns each worker a resident prefix target from an observed prefix tree, demand matrix, memory budget, and profiled materialization costs. A feasible target is \emph{prefix-complete}: retaining a chunk also retains every cacheable ancestor on its prefix path.

Coordination is needed because one resident replica can reduce miss cost for multiple requesters, but only where fetching beats recomputation; when transfer cost depends on the source, replica identity also affects value. Local-popularity policies can therefore over-replicate shared prefixes, whereas deduplication alone ignores requester heterogeneity and source-dependent transfer. PrefixPlace accepts arbitrary epoch-level compute and transfer profiles rather than a parametric latency law; every local copy then contributes modular value, the first replica contributes weighted coverage, and source-dependent costs generalize this to a monotone facility-location objective. In both cases, one worker's exact marginal value is additive across tree nodes, enabling a shared rooted-tree oracle with provable guarantees.

Our contributions are threefold.
\begin{itemize}
\item \textbf{Cost-aware prefix placement model.} We formulate epoch-level placement of prefix-complete KV targets under arbitrary profiled compute and transfer costs. We derive an exact modular-plus-coverage decomposition for requester-side transfer costs and a source-dependent facility-location formulation, precisely characterizing the shared value created by the first replica of each chunk.
\item \textbf{Algorithms with guarantees.} For a prefix tree with $n$ placement chunks and a worker capacity of $k$ chunks, we give an $O(nk)$ exact single-worker Dynamic Programming (DP) algorithm. We prove that coordinated placement is strongly NP-hard and develop WorkerGreedy, a fixed-order $1/2$-approximation for both objectives. PrefixPlace combines this guaranteed initialization with coordinate refinement and order-diverse starts to improve solution quality while preserving the bound.
\item \textbf{Profile-driven evaluation.} Measurements on T4, L4, and A100 GPUs across Qwen2.5 models from 3B to 32B expose hardware- and goodput-dependent fetch/recompute regimes. PrefixPlace averages 99.84\% of exact optimum over 432 requester-side and 99.62\% over 45 source-dependent instances, and Retrieval-Augmented Generation (RAG) replay, WikiQA exact-demand, topology, demand-shift, and Central Processing Unit (CPU) scaling studies evaluate practical advantage and replanning cost.
\end{itemize}

The rest of the paper is organized as follows. Sections~\ref{sec:measurement}--\ref{sec:eval} present the problem setting, optimization model, algorithms, and performance evaluation respectively; Section~\ref{sec:related} reviews related work and Section~\ref{sec:conclusion} concludes with future directions.

\section{Problem Setting}
\label{sec:measurement}

\subsection{Prefix KV as a placement object}
A transformer prefix determines the KV tensors reusable by every continuation of the same token sequence. Modern engines expose this reuse through block-structured prefix caches~\cite{vllm,sglang}, while recent designs stream, offload, or share KV state~\cite{cachegen,mooncake,cachedattention,droidspeak}. PrefixPlace groups contiguous engine blocks into \emph{placement chunks}, each treated as one placement and materialization unit. Grouping preserves exact-prefix semantics, amortizes fixed operation overhead, and limits optimizer state.

For each worker, the planner selects a prefix-complete resident target: a parent-closed set in which selecting a chunk also selects every cacheable ancestor. A target may contain multiple branches, but each selected chunk lies on a complete locally resident prefix path. A request's materialization cost is the sum of the incremental costs of its required chunks; PrefixPlace consumes the profiled lookup values directly and does not assume a linear depth-cost relationship.

\subsection{Compute and transfer cost model}
Let $w_m(b)$ be the profiled cost for worker $m$ to materialize placement chunk $b$ locally, conditioned on its preceding prefix, and let $f_m(b)$ be the effective cost to fetch that chunk from an eligible peer. For a chunk of $Z_b$ bytes and effective KV goodput $r_m$,
\begin{equation}
 f_m(b)=Z_b/r_m.
\label{eq:fetchcost}
\end{equation}
The planner may instead consume a directly measured fetch-cost table, including stable copy, protocol, and round-trip components. When holder identity matters, $f_{m\leftarrow h}(b)$ denotes the cost for requester $m$ to fetch $b$ from holder $h$.

Remote reuse is beneficial to requester $m$ exactly when $w_m(b)>f_m(b)$. We call a measured depth where the preferred action changes a \emph{fetch/recompute crossover}; the optimization neither requires one to exist nor assumes monotone costs, and hardware, model scale, prefix depth, and network path can each flip the preferred action for the same logical chunk.

\subsection{Planning epoch and compatibility}
PrefixPlace optimizes one planning epoch from aggregate prefix-tree demand, worker budgets, compatible resident states, and effective compute/transfer profiles. Routing operates outside the optimizer and determines the epoch demand matrix; Section~\ref{sec:alg} describes when refreshed inputs trigger replanning. Candidate holders are filtered for compatible model weights and revision, KV precision, positional encoding, and parallelism layout. The planner exports one prefix-complete target per worker to the exact-prefix cache layer.

\section{Optimization Model and Structural Decomposition}
\label{sec:model}

\subsection{Prefix tree, demand, and feasibility}
Let $\T=(V,E)$ be a rooted prefix tree with $n=|V|$ equal-sized cacheable placement chunks; an optional no-cost dummy root represents the empty prefix and does not consume budget. Each node extends the exact prefix represented by its parent. Worker $m\in\M=\{1,\ldots,M\}$ has budget $k_m$ and selects a resident set $T_m\subseteq V$. We call $T_m$ \emph{prefix-complete} when it is parent-closed: selecting a node also selects every cacheable ancestor. A target may contain multiple branches, but every selected node lies on a complete locally resident prefix path. Workers optimized together satisfy the compatibility filters of Section~\ref{sec:measurement}, so $k_m$ maps directly to a byte budget.

Let $\Lambda_m(b)\ge0$ be the aggregate epoch demand for requests routed to worker $m$ that require chunk $b$. A request contributes demand to every placement chunk on its prefix path. When $\Lambda$ is measured in requests per epoch and $w,f$ in milliseconds, the objective below is milliseconds saved per epoch; normalized demand weights induce the same optimizer. The local recomputation cost is $w_m(b)\ge0$. We first consider a source-independent, requester-specific fetch cost $f_m(b)\ge0$, which may vary by requester and chunk; Section~\ref{sec:sourceaware} then admits arbitrary source-dependent costs.

Once resident locally, a chunk incurs zero materialization cost for that requester. If only a peer stores it, the requester chooses the cheaper of recomputation and fetch, paying $\min\{w_m(b),f_m(b)\}$; with no resident copy, it pays $w_m(b)$. Thus, a remote replica creates value only for requesters whose transfer cost is below recomputation.

\subsection{Savings decomposition}
Use an all-recompute placement as the baseline. Define the local value
\begin{equation}
 a_m(b)=\Lambda_m(b)\min\{w_m(b),f_m(b)\},
\label{eq:am}
\end{equation}
and the aggregate remote-coverage value
\begin{equation}
 \Delta(b)=\sum_{m=1}^{M}\Lambda_m(b)\,[w_m(b)-f_m(b)]_+,
\label{eq:delta}
\end{equation}
where $[x]_+=\max\{x,0\}$. The local term measures what worker $m$ gains from storing $b$ even when another copy exists. The coverage term measures the shared saving created by the \emph{first} resident copy. This separation is exact, not an approximation.

\begin{lemma}[Modular-plus-coverage decomposition]
\label{lem:decomp}
For any feasible placement $\bm T=(T_1,\ldots,T_M)$, total materialization-cost savings are exactly
\begin{equation}
\F(\bm T)=
\sum_{m=1}^{M}\sum_{b\in T_m} a_m(b)
+\sum_{b\in \cup_m T_m}\Delta(b).
\label{eq:objective}
\end{equation}
Consequently $\F:2^{\mathcal E}\!\to\mathbb R_+$ is nonnegative, monotone, and submodular over the ground set $\mathcal E=\{(m,b):m\in\M,\ b\in V\}$ of worker--chunk placement elements.
\end{lemma}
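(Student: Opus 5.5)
The plan is a per-(requester, chunk) cost accounting followed by a telescoping sum. Throughout, fix a feasible $\bm T$ and write $U=\cup_m T_m$. The cost model of Section~\ref{sec:model} gives, for each pair $(m,b)$, the cost paid by requester $m$ for chunk $b$:
\begin{itemize}
\item $0$ if $b\in T_m$;
\item $\min\{w_m(b),f_m(b)\}$ if $b\in U\setminus T_m$;
\item $w_m(b)$ if $b\notin U$.
\end{itemize}
Because the baseline is all-recompute, the per-pair saving is $\Lambda_m(b)$ times $w_m(b)-\text{cost}$. The total saving $\F(\bm T)$ is the sum of these per-pair savings over all $m$ and $b$, since the cost of a request is the sum of the incremental costs of its chunks, and $\Lambda_m(b)$ aggregates the demand routed to $m$ that requires $b$.

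\textbf{Exactness.} Next I will use the identity $w-\min\{w,f\}=[w-f]_+$ and rewrite the saving of each pair as a sum of two terms. The first is a coverage part $\Lambda_m(b)[w_m(b)-f_m(b)]_+\cdot\mathbf 1[b\in U]$. The second is a local part $\Lambda_m(b)\min\{w_m(b),f_m(b)\}\cdot\mathbf 1[b\in T_m]$.

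I check the three cases:
\begin{itemize}
\item If $b\notin U$, both parts are $0$.
\item If $b\in U\setminus T_m$, the sum is $\Lambda_m(b)[w_m(b)-f_m(b)]_+$, which is the saving from the cheaper of recompute and fetch.
\item If $b\in T_m$, the sum is $\Lambda_m(b)\bigl([w-f]_++\min\{w,f\}\bigr)=\Lambda_m(b)w_m(b)$, which is the full saving.
\end{itemize}
Summing over $m$ turns the coverage parts into $\sum_{b\in U}\Delta(b)$ by \eqref{eq:delta}, and turns the local parts into $\sum_m\sum_{b\in T_m}a_m(b)$ by \eqref{eq:am}. This gives \eqref{eq:objective}.

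\textbf{Set-function properties.} I identify a placement with the subset $S=\{(m,b):b\in T_m\}\subseteq\mathcal E$. The first term is then the modular function $\sum_{(m,b)\in S}a_m(b)$, whose weights are nonnegative because $\Lambda,w,f\ge0$. The second term is a weighted coverage function: element $(m,b)$ covers item $b$, and item $b$ has weight $\Delta(b)\ge0$. So $\Delta$ applied to the projection of $S$ onto $V$ is monotone and submodular. To see submodularity directly, adding $(m,b)$ to $S$ contributes marginal coverage $\Delta(b)$ exactly when $b$ is not yet covered. This indicator can only decrease as $S$ grows, so marginals shrink. A nonnegative modular function plus a nonnegative monotone submodular function is nonnegative, monotone, and submodular. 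The empty set gives $\F=0$, which confirms $\F\ge0$.

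\textbf{Main obstacle.} The main subtlety is interpretive rather than computational. $\F$ is defined on all of $2^{\mathcal E}$, while feasibility (parent-closure and the budgets $k_m$) restricts which subsets are allowed. I will note that formula \eqref{eq:objective} extends verbatim to arbitrary subsets, and that the per-pair cost accounting never uses parent-closure. Submodularity is therefore a property of this extended function, and the feasibility constraints are handled separately by the algorithms. The other point to check is the case $b\in T_m$. There the local-plus-coverage sum must recover the full $w_m(b)$ regardless of whether $w\le f$ or $w>f$, and the identity $\min\{w,f\}+[w-f]_+=w$ settles this in both subcases.
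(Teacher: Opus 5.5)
Your proposal is correct and follows the paper's proof essentially step for step. Both use the same three-case accounting per (requester, chunk) pair with $\min\{w_m(b),f_m(b)\}+[w_m(b)-f_m(b)]_+=w_m(b)$, sum the cases to obtain \eqref{eq:objective}, and identify the result as a nonnegative modular term plus a weighted coverage term. Your remarks that \eqref{eq:objective} extends verbatim to all of $2^{\mathcal E}$, and that submodularity therefore does not depend on parent-closure or budgets, make explicit a point the paper leaves implicit.
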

\begin{proof}
Fix requester $m$ and chunk $b$. If no worker stores $b$, the saving is zero. If a peer but not $m$ stores it, the saving is $[w_m(b)-f_m(b)]_+$. If $m$ stores it, the saving is $w_m(b)=\min\{w_m(b),f_m(b)\}+[w_m(b)-f_m(b)]_+$. Multiplying by $\Lambda_m(b)$, summing over requesters, and observing that the second summand is earned once iff $b$ is covered gives~\eqref{eq:objective}. The first term is modular and the second is weighted coverage, hence monotone submodular.
\end{proof}

The decomposition reduces arbitrary compute and transfer profiles to two interpretable values. If every requester prefers recomputation, $\Delta(b)=0$ and coordination creates no remote-reuse value for chunk $b$. Otherwise, the first replica contributes exactly $\Delta(b)$ beyond the holder's local benefit. Hence $\Delta$ identifies where coordination can matter, while budgets and parent closure determine which opportunities are feasible.

\subsection{Source-dependent transfer costs}
\label{sec:sourceaware}
Requester-side costs treat eligible holders as equivalent, while transfer costs may also depend on the source. Let $f_{m\leftarrow h}(b)$ be the cost for requester $m$ to fetch chunk $b$ from holder $h$, and define the saving offered by holder $h$ as
\begin{equation}
 q_{m\leftarrow h}(b)=\begin{cases}
 w_m(b), & h=m,\\
 [w_m(b)-f_{m\leftarrow h}(b)]_+, & h\ne m.
 \end{cases}
\label{eq:qpair}
\end{equation}
If $H_b(\bm T)=\{h:b\in T_h\}$ is the holder set induced by placement $\bm T$, then the exact saving is
\begin{equation}
 \F_{\rm pair}(\bm T)=\sum_{b\in V}\sum_{m=1}^{M}\Lambda_m(b)
 \max_{h\in H_b(\bm T)}q_{m\leftarrow h}(b),
\label{eq:pair}
\end{equation}
where the maximum over an empty set is zero.

\begin{proposition}[Source-dependent marginal oracle]
\label{prop:source}
$\F_{\rm pair}$ is nonnegative, monotone, and submodular. Moreover, after fixing the current placements of all workers other than $h$, its exact marginal placement problem is a parent-closed tree selection with additive node profits
\begin{equation}
 p_h(b)=\sum_m\Lambda_m(b)[q_{m\leftarrow h}(b)-v_m^{(-h)}(b)]_+,
\label{eq:pairmarginal}
\end{equation}
where $v_m^{(-h)}(b)$ is requester $m$'s best saving on $b$ among the current holders other than $h$.
\end{proposition}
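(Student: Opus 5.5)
We prove the two claims separately. Both rest on one observation: for each fixed chunk $b$ and requester $m$, the quantity $\max_{h\in H_b(\bm T)}q_{m\leftarrow h}(b)$ is a facility-location (max-of-weights) function of the holder set $H_b(\bm T)$.

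\textbf{Set-function properties.} Identify $\bm T$ with the subset $S=\{(h,b):b\in T_h\}\subseteq\mathcal E$. Each summand of~\eqref{eq:pair} then depends only on the slice $S_b=\{h:(h,b)\in S\}$, through
\[
g_{m,b}(S_b)=\max_{h\in S_b}q_{m\leftarrow h}(b).
\]
All $q_{m\leftarrow h}(b)\ge0$, since $w_m(b)\ge0$ and $[\cdot]_+\ge0$, and the empty maximum is zero. Hence $g_{m,b}$ is nonnegative and monotone. For submodularity we use the standard facility-location argument. Take $A\subseteq B$ and $h\notin B$. The marginal gain is
\[
g(A\cup\{h\})-g(A)=[q_{m\leftarrow h}-g(A)]_+ .
\]
This is nonincreasing in $g(A)$, and $g(A)\le g(B)$, so the gain at $A$ is at least the gain at $B$. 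A function of the slice $S_b$ alone, viewed as a function on $\mathcal E$, inherits these properties. Nonnegative weighted sums with weights $\Lambda_m(b)\ge0$ preserve nonnegativity, monotonicity and submodularity. This gives the first claim. As a sanity check, under $f_{m\leftarrow h}=f_m$ the function reduces to Lemma~\ref{lem:decomp}, although the proof does not need this.

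\textbf{Marginal oracle.} Fix $T_{h'}$ for all $h'\neq h$ and let worker $h$ choose a parent-closed $T_h$ with $|T_h|\le k_h$. For each $b$, the holders other than $h$ are fixed. Define $v_m^{(-h)}(b)$ as the maximum of $q_{m\leftarrow h'}(b)$ over those holders, which is $0$ if there are none. Whether $h$ holds $b$ only changes requester $m$'s term from $v_m^{(-h)}(b)$ to $\max\{v_m^{(-h)}(b),q_{m\leftarrow h}(b)\}$. The increment is therefore exactly $[q_{m\leftarrow h}(b)-v_m^{(-h)}(b)]_+$, and it depends on $T_h$ only through the indicator $[b\in T_h]$.

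Summing over $m$ and $b$ gives
\[
\F_{\rm pair}(\bm T)=\F_{\rm pair}(\bm T\text{ with }T_h=\emptyset)+\sum_{b\in T_h}p_h(b),
\]
with $p_h$ as in~\eqref{eq:pairmarginal}. Chunks interact only through the parent-closure and budget constraints, not through the objective. So maximizing the marginal gain over $T_h$ is exactly the parent-closed, budgeted tree selection problem with additive profits $p_h\ge0$, as stated.

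\textbf{Expected obstacle.} Neither step is deep. The one point needing care is that the decoupling across chunks is exact. Holder sets are per-chunk, so the max in~\eqref{eq:pair} never couples different $b$. The max over holders couples workers, but once the other workers are frozen, that coupling is absorbed into the constant $v_m^{(-h)}(b)$. We would state explicitly that $h$'s own current placement is excluded from $v_m^{(-h)}(b)$ and that $[\cdot]_+$ correctly handles the case where $h$ offers no improvement.
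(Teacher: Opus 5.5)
Your proof is correct and follows essentially the same route as the paper: each $(m,b)$ summand is a max-of-nonnegative-weights (facility-location) function of the holder set, nonnegative weighted sums preserve monotone submodularity, and freezing the other workers makes worker $h$'s effect on each $(m,b)$ term the fixed increment $[q_{m\leftarrow h}(b)-v_m^{(-h)}(b)]_+$, which then adds up over $b\in T_h$. You also spell out details the paper leaves implicit: the diminishing-returns check via $[q_{m\leftarrow h}-g(A)]_+$, the lifting from per-chunk slices to $\mathcal E$, and the explicit identity $\F_{\rm pair}(\bm T)=\F_{\rm pair}(\bm T \text{ with } T_h=\emptyset)+\sum_{b\in T_h}p_h(b)$.
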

\begin{proof}
For each $(m,b)$, the term in~\eqref{eq:pair} is the maximum of fixed nonnegative weights over selected holders, which is monotone submodular. Their nonnegative weighted sum is therefore monotone submodular. Adding holder $h$ improves $(m,b)$ by exactly $[q_{m\leftarrow h}(b)-v_m^{(-h)}(b)]_+$; summing over requesters gives~\eqref{eq:pairmarginal}, and summing these independent chunk marginals over a candidate subtree gives its exact gain.
\end{proof}
Source dependence changes the marginal node weights but not the per-worker tree problem. The rooted-tree oracle and the approximation/refinement framework therefore apply to both objectives. Setting $f_{m\leftarrow h}(b)=f_m(b)$ for every $h\ne m$ recovers~\eqref{eq:objective}.

The placement problem is therefore
\begin{align}
 \max_{T_1,\ldots,T_M}\quad & \F(\bm T) \label{eq:problem}\\
 \text{s.t.}\quad & |T_m|\le k_m,\quad T_m\text{ parent-closed},\ \forall m,\nonumber
\end{align}
with $\F_{\rm pair}$ replacing $\F$ for source-dependent costs. Compute and transfer behavior enter through profiled node costs; no ordering, convexity, or parametric latency law is required. Lookup tables, interpolation, and analytical profiles therefore lead to the same combinatorial problem and the same guarantees.

\section{Algorithms and Planner Operation}
\label{sec:alg}

Figure~\ref{fig:arch} summarizes how PrefixPlace turns epoch inputs into per-worker targets. This section develops each stage in turn. 

\begin{figure*}[t]
\centering
\includegraphics[width=\textwidth]{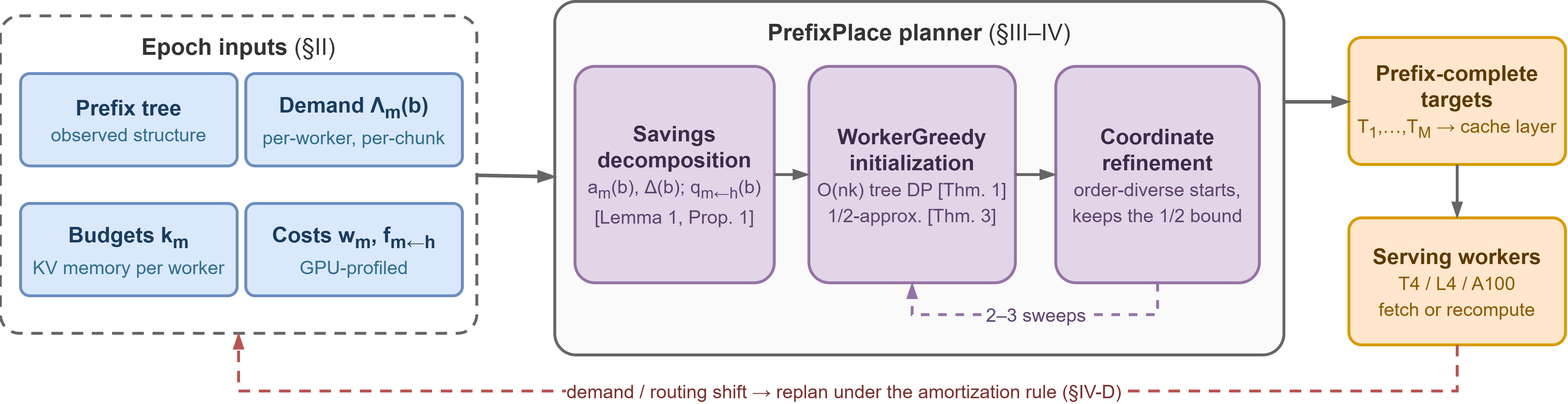}
\caption{PrefixPlace architecture. Each epoch, the observed prefix tree, per-worker demand matrix $\Lambda_m(b)$, memory budgets $k_m$, and GPU-profiled compute/transfer costs $w_m(b)$, $f_{m\leftarrow h}(b)$ feed a three-stage planner: exact savings decomposition into local value, coverage, and source-aware marginals (Lemma~\ref{lem:decomp}, Prop.~\ref{prop:source}); WorkerGreedy initialization with the exact $O(nk)$ rooted-tree oracle and a fixed-order $1/2$-approximation (Theorems~\ref{thm:dp} and~\ref{thm:half}); and monotone coordinate refinement over order-diverse starts that preserves the $1/2$ bound. One prefix-complete target per worker is exported to the exact-prefix cache layer; demand or routing shifts trigger replanning under the amortization rule (\S\ref{sec:alg}-D).}
\label{fig:arch}
\end{figure*}

\subsection{Exact single-worker oracle}
The shared algorithmic primitive is a maximum-profit parent-closed subtree for one worker. Let each node have profit $p(b)$ and let the budget be $k$. Number nodes in preorder $b_1,\ldots,b_n$, and let $\mathrm{next}(i)$ be the first position after the subtree rooted at $b_i$. Define $D[i,j]$ as the maximum profit obtainable from positions $i,\ldots,n$ using at most $j$ additional chunks, under the invariant that the nodes selected before position $i$ are parent-closed and contain every ancestor of $b_i$. For $i\le n$ and $j\ge1$,
\begin{equation}
D[i,j]=\max\!\left\{
D[\mathrm{next}(i),j],\;
 p(b_i)+D[i+1,j-1]
\right\}.
\label{eq:dp}
\end{equation}
The boundary conditions are $D[n+1,j]=0$ for all $j$ and $D[i,0]=0$ for all $i$. The first branch skips $b_i$ and therefore its entire subtree; the second includes $b_i$, so preorder position $i+1$ is eligible. Subtree endpoints, and hence every $\mathrm{next}(i)$, are precomputed in $O(n)$ time.

\begin{theorem}[Exact rooted-tree oracle]
\label{thm:dp}
For a rooted tree with $n$ placement chunks and budget $k$, the maximum-profit parent-closed placement is computed exactly in $O(nk)$ time and $O(nk)$ memory.
\end{theorem}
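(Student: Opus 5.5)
The plan is to prove that recurrence~\eqref{eq:dp} computes the optimum, by showing that parent-closed subsets of $\T$ correspond exactly to certain traversals of the preorder sequence. With the root at position~1 (or the dummy root's children handled as a forest, which preorder supports in the same way), the optimum is then $D[1,k]$.

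\textbf{Preorder characterization.} I will first prove a structural fact. A set $S\subseteq V$ is parent-closed exactly when it arises from the following scan: start at position 1, and at each position $i$ either skip to $\mathrm{next}(i)$ or include $b_i$ and move to $i+1$.

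In preorder, the subtree of $b_i$ occupies the contiguous block $i,\ldots,\mathrm{next}(i)-1$. Position $i+1$, when it lies inside this block, is the first child of $b_i$. When $i+1=\mathrm{next}(i)$, position $i+1$ is the next sibling of $b_i$ or of one of its ancestors.

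For the direction from scan to parent-closed set, I will show the invariant by induction on the scan: whenever position $i$ is reached, every ancestor of $b_i$ has already been included. Skipping $b_i$ jumps past its entire subtree, so every node reached next has ancestors that are ancestors of $b_i$, or $b_i$'s parent's other descendants already handled. Including $b_i$ makes it the newly satisfied ancestor for position $i+1$.

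For the converse, given a parent-closed $S$, I run the scan with the rule "include iff $b_i\in S$." I then check that every node of $S$ is visited. If some node of $S$ were jumped over, it would lie in the subtree of a skipped node $b_i\notin S$. Parent-closure then forces $b_i\in S$, a contradiction.

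\textbf{Optimal substructure.} I will prove by backward induction on $i$, and on $j$, that $D[i,j]$ equals the maximum profit of a set $S\subseteq\{b_i,\ldots,b_n\}$ with $|S|\le j$, subject to one condition: $S$ together with any parent-closed prefix selection satisfying the invariant must remain parent-closed. Equivalently, $S$ is produced by the scan started at $i$.

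The boundary cases hold trivially. For the inductive step, split on whether $b_i\in S$:
\begin{itemize}
\item If $b_i\notin S$, then no descendant of $b_i$ can be in $S$, and the remainder is a feasible set from $\mathrm{next}(i)$.
\item If $b_i\in S$, the remainder is feasible from $i+1$ with budget $j-1$.
\end{itemize}
Both branches preserve the invariant, so the maximum of the two is optimal. Profits are additive over nodes, which is precisely what Lemma~\ref{lem:decomp} and Proposition~\ref{prop:source} provide for the marginal objectives. This additivity makes the decomposition exact.

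The point I expect to require the most care is the invariant at $\mathrm{next}(i)$. After skipping, the ancestors of $b_{\mathrm{next}(i)}$ must already be selected, or else position $\mathrm{next}(i)$ must be unreachable as a subproblem. The fix is to note that every ancestor of $b_{\mathrm{next}(i)}$ is a proper ancestor of $b_i$, or the common root. Since the invariant held at $i$, those ancestors were selected.

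\textbf{Complexity.} Computing $\mathrm{next}(\cdot)$ takes one DFS in $O(n)$. The table has $(n+1)(k+1)$ entries, each filled in $O(1)$ by iterating $i$ from $n$ down to $1$. This gives $O(nk)$ time and $O(nk)$ memory. The full table is retained so the argmax can be traced back to recover the selected set, still in $O(n+k)$ extra time.
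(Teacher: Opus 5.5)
Your proposal is correct and follows essentially the same proof as the paper. Both use the same preorder skip/include recurrence, the same eligibility invariant (every ancestor of $b_i$ is already selected), the same key observation that every ancestor of $b_{\mathrm{next}(i)}$ is a proper ancestor of $b_i$, and the same $O(nk)$ table-and-backtrack count; your explicit two-way correspondence between parent-closed sets and scans just spells out the exhaustiveness that the paper asserts in a single line.
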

\begin{proof} 
In preorder, the subtree rooted at $b_i$ occupies the contiguous interval $\{b_i,\ldots,b_{\mathrm{next}(i)-1}\}$. Consider an eligible state $(i,j)$ in which the previously selected nodes are parent-closed and contain every ancestor of $b_i$. Every feasible continuation makes one of two decisions. If it omits $b_i$, parent closure also excludes every descendant of $b_i$, so the next state is $(\mathrm{next}(i),j)$. If $\mathrm{next}(i)\le n$, every ancestor of $b_{\mathrm{next}(i)}$ is also an ancestor of $b_i$ and is therefore already selected. If the continuation includes $b_i$, it earns $p(b_i)$ and proceeds to $(i+1,j-1)$. When $b_{i+1}$ is a descendant of $b_i$, all of its ancestors are either ancestors of $b_i$ or $b_i$ itself and are therefore selected; when $b_i$ is a leaf, $i+1=\mathrm{next}(i)$ and the preceding argument applies. Thus, both successor states preserve eligibility, and the two cases are exhaustive, establishing~\eqref{eq:dp} by backward induction. Hence $D[1,k]$ is optimal. A preorder traversal computes all subtree endpoints in $O(n)$ time. The table contains $O(nk)$ states with $O(1)$ work per state, and backtracking visits at most $n$ states. 
\end{proof}

The resulting dynamic program avoids generic child-by-child tree-knapsack convolution and is linear in tree size for each budget unit.

\subsection{Joint placement: hardness and approximation}
Coordinated placement couples workers through shared coverage value. The resulting hardness is intrinsic rather than an artifact of heterogeneous hardware or transfer costs.

\begin{theorem}[Strong NP-hardness]
\label{thm:hard}
Maximizing~\eqref{eq:objective} over parent-closed placements with per-worker budgets is strongly NP-hard, even when all workers share the same depth-dependent recomputation and fetch costs and all holder workers have the same positive budget.
\end{theorem}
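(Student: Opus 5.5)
I plan a pseudo-polynomial-free reduction from a strongly NP-hard covering problem, namely 3-Partition or, more naturally, Maximum Coverage / Exact Cover by 3-Sets (X3C). X3C fits best because the coverage term $\sum_{b\in\cup_m T_m}\Delta(b)$ in Lemma~\ref{lem:decomp} is a weighted coverage function, and parent closure lets a single worker's choice of one root-to-leaf ``branch'' force it to hold a prescribed bundle of chunks. Given an X3C instance with universe $U=\{u_1,\ldots,u_{3q}\}$ and triples $S_1,\ldots,S_r$, I would build a prefix tree with a dummy root and, for each triple $S_i$, a path (gadget) whose nodes include copies of the three elements of $S_i$. I would use $M=q$ holder workers with common budget $k$ equal to the gadget length, plus requester workers with budget $0$ (or zero budget workers only generating demand), so that each holder can afford essentially one gadget.

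\textbf{Step 1 (uniform costs).} Set all recomputation costs $w_m(b)=w(d(b))$ and fetch costs $f_m(b)=f(d(b))$ depending only on depth, with $w>f$ at the depths where element nodes sit, so $\Delta$ is positive there. Choose demand so that local values $a_m(b)$ are zero or negligible for holders (holders have no demand), making the objective pure coverage.

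\textbf{Step 2 (encoding shared elements).} The difficulty is that in a tree each node belongs to one path, so two triples sharing element $u_j$ cannot literally share a node. I would fix this by letting chunk identity encode the element: place element nodes as children in a shared structure, e.g.\ the root has children $u_1,\ldots,u_{3q}$ ordered so each triple is a parent-closed set of size three? A parent-closed set under the dummy root is any set of depth-1 nodes, so with a star tree and budget $3$ each holder picks any 3 elements, which trivializes the problem. Hence I would instead make demand, not node identity, carry the covering structure: each triple $S_i$ is a distinct path of length $L$, and requesters are elements $u_j$, with $\Lambda_{u_j}(b)>0$ only on chunks of gadgets $S_i\ni u_j$, but the value at each chunk is per-chunk, so coverage of an element would be counted per gadget. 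To obtain ``an element counts once'' I would rely on the decomposition being per chunk and instead reduce from a problem where items are chunks: reduce from 3-Partition, where the tree has $r$ disjoint paths of lengths $a_i$ (unary-encoded, valid for strong hardness), every node has equal $\Delta=1$, and $q$ holders with budget $B$ each. Total coverage $qB$ is achievable with no duplication iff paths can be packed exactly, but partial paths are also parent-closed, so I would append a heavy-value terminal node whose $\Delta$ only pays when the full path is held, e.g.\ weight $W$ on leaves, and ask whether value $qB + rW$... requiring all leaves caps; target total $\sum_i(a_i-1)+rW$ with $\sum a_i=qB$.

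\textbf{Main obstacle.} Establishing the converse direction: showing an optimal placement reaching the threshold must hold every path entirely in one worker without duplication. I would argue that every leaf must be held (else lose $W$ which exceeds all other value), holding a leaf forces its whole path by parent closure, total budget equals total path length so no node is duplicated, and each worker's held paths sum to exactly $B$, yielding a 3-partition. Polynomial size in unary numbers gives strong NP-hardness.
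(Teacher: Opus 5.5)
Your final construction is essentially the paper's reduction: 3-Partition, a spider with one leg of $s_i$ chunks per item, $q$ holders of common budget $B$, and a zero-budget requester that carries all demand. The converse is also the same: covering a tip forces its holder to store the whole leg, and total budget $qB=\sum_i s_i$ forbids duplication and forces each load to equal $B$. The routes differ only in how partial legs are ruled out. The paper takes $w(d)=2d$, $f=1$ and tip rate $\lambda_i=1/s_i$, so a leg covered to depth $x_i$ is worth $x_i^2/s_i\le x_i$, which makes partial legs strictly inefficient. You instead set the threshold equal to the total weight, so counting alone forces every chunk to be covered. That counting argument is sound, and it already makes your heavy-leaf gadget redundant: a target equal to the total weight requires every tip to be covered whether it weighs $W$ or $1$. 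Your worry that ``partial paths are also parent-closed'' is therefore unfounded.

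The gadget is the one step that fails as written, because it conflicts with the \emph{strong} part of the claim. $\Delta$ is not a free per-node parameter. With shared costs, $\Delta(b)=\Lambda(b)\,g(d_b)$, where $\Lambda(b)=\sum_m\Lambda_m(b)$ and $g(d)=[w(d)-f(d)]_+$. Here $\Lambda$ is non-increasing from root to tip along each leg, since every request charges its whole prefix path, and $g$ is common to all legs. Requiring $\Delta=1$ on internal nodes forces $g$ to be non-decreasing over the internal depths of the longest leg. Requiring $\Delta=W>1$ at each tip forces $g(s_j)\ge W\,g(s_j-1)$ at every item length $s_j$. With $K$ distinct lengths, the cost values then span a factor of at least $W^K$, and at least $(qB)^K$ under your choice $W>qB$. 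These numbers are exponentially large, so the reduction yields only ordinary NP-hardness. You also never specify the $\Lambda$, $w$, $f$ that realize your weights. The fix is to set $W=1$. Uniform $\Delta\equiv1$ is realized by $w\equiv2$, $f\equiv1$ and one unit-rate request at each leg tip; holders carry no demand, so every $a_m$ term vanishes. Your counting argument then goes through verbatim with polynomially bounded numbers. If you want costs that genuinely grow with depth, adopt the paper's $w(d)=2d$, $f=1$, $\lambda_i=1/s_i$ instead.
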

\begin{proof}
Reduce from 3-Partition. Given integers $s_1,\ldots,s_{3q}$ with
$\sum_i s_i=qB$ and $B/4<s_i<B/2$, construct a spider with an
always-present dummy root and one leg of $s_i$ cacheable chunks for each item.
Create $q$ holder workers of budget $B$ and one requester worker of budget zero.
All workers have the same costs: a depth-$d$ chunk has recomputation cost
$w(d)=2d$ and fetch cost $f=1$. Only the requester has demand. Put one request
at the tip of leg $i$ with rate $\lambda_i=1/s_i$; hence every chunk on that
leg has demand $\lambda_i$ at the requester.

If the union of holder placements covers the first $x_i\le s_i$ chunks of leg
$i$, its saving is
\[
 \lambda_i\sum_{d=1}^{x_i}(w(d)-f)
 =\frac{1}{s_i}\sum_{d=1}^{x_i}(2d-1)
 =\frac{x_i^2}{s_i}\le x_i,
\]
with equality only for $x_i\in\{0,s_i\}$. Therefore total saving is at most
$\sum_i x_i\le qB$, the aggregate holder capacity. Reaching $qB$ requires every
covered leg to be complete, every item leg to be covered, no duplicated chunk,
and every holder budget to be full. Because a placement is parent-closed,
covering the tip of a leg forces one holder to store that entire leg. Thus
value $qB$ is attainable iff the item lengths can be assigned to the $q$
holders with load exactly $B$ each. The 3-Partition bounds then force exactly
three items per holder. Conversely, any valid 3-partition gives such a
placement and attains $qB$. The construction uses only polynomially encoded
integers/rationals and preserves the strong NP-hardness of 3-Partition~\cite{garey-johnson}.
\end{proof}

The decomposition nevertheless preserves exact per-worker optimization. For a fixed worker order, let $U$ be the chunks already covered by earlier workers. Worker $m$ then sees the exact marginal profit
\begin{equation}
 p_m^U(b)=a_m(b)+\mathbf{1}\{b\notin U\}\Delta(b).
\label{eq:marginal}
\end{equation}
The best feasible placement for $m$ is therefore exactly Theorem~\ref{thm:dp} with profits $p_m^U$.

\begin{theorem}[WorkerGreedy]
\label{thm:half}
Sequentially optimizing each worker's exact marginal contribution in any order fixed before the run produces a feasible placement $\bm T^{G}$ with
$G(\bm T^{G})\ge\tfrac12G(\bm T^*)$, where $G$ is either $\F$ in~\eqref{eq:objective} or the source-dependent $\F_{\rm pair}$ in~\eqref{eq:pair}.
\end{theorem}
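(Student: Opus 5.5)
This is the standard argument for greedy maximization of a monotone submodular function over a partition structure, where each block of the partition is one worker, together with an exact local oracle. Fix the order $1,\ldots,M$ and write $S_{m}=T^G_1\cup\cdots\cup T^G_m$ for the set of placement elements $(h,b)$ chosen by the first $m$ workers, with $S_0=\emptyset$. By Lemma~\ref{lem:decomp} for $\F$, and by Proposition~\ref{prop:source} for $\F_{\rm pair}$, $G$ is nonnegative, monotone and submodular on the ground set $\mathcal E$. Worker $m$'s feasible sets are the parent-closed subsets of $\{m\}\times V$ of size at most $k_m$, and these constraints are independent across workers. Theorem~\ref{thm:dp}, with profits $p^U_m$ from~\eqref{eq:marginal} or $p_h$ from~\eqref{eq:pairmarginal}, returns the exact maximizer of the marginal $G(S_{m-1}\cup T)-G(S_{m-1})$ over all feasible $T$. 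This uses the additivity of the marginal over nodes, which Lemma~\ref{lem:decomp} and Proposition~\ref{prop:source} establish. So step $m$ satisfies
\[
G(S_m)-G(S_{m-1})\ \ge\ G(S_{m-1}\cup T^*_m)-G(S_{m-1}),
\]
because $T^*_m$, worker $m$'s part of an optimal placement, is one of the feasible candidates.

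The key steps come next. Monotonicity gives
\[
G(\bm T^*)\le G(S_M\cup \bm T^*).
\]
Telescoping the optimal blocks added on top of $S_M$ gives
\[
G(S_M\cup \bm T^*)-G(S_M)=\sum_{m}\bigl[G(S_M\cup T^*_1\cup\cdots\cup T^*_m)-G(S_M\cup T^*_1\cup\cdots\cup T^*_{m-1})\bigr].
\]
Submodularity bounds each summand by the marginal of $T^*_m$ alone with respect to the smaller set $S_{m-1}\subseteq S_M$. This uses the diminishing-returns inequality for a set, which follows from the elementwise version by adding elements one at a time. By the oracle-optimality inequality above, that marginal is at most $G(S_m)-G(S_{m-1})$. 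Summing telescopes to
\[
G(\bm T^*)-G(S_M)\le G(S_M)-G(\emptyset)\le G(S_M),
\]
using $G(\emptyset)=0$. Hence $G(\bm T^G)=G(S_M)\ge\tfrac12 G(\bm T^*)$. Feasibility is immediate because each $T^G_m$ is parent-closed with $|T^G_m|\le k_m$.

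The main point needing care is the per-step optimality claim. It requires that the profits fed to the oracle equal the exact marginal of every candidate parent-closed set relative to $S_{m-1}$, not just of single nodes. For $\F$, the marginal of adding $T$ for worker $m$ is $\sum_{b\in T}a_m(b)+\sum_{b\in T\setminus U}\Delta(b)$, which is additive since distinct nodes contribute separately. For $\F_{\rm pair}$, adding holder $h$ affects each $(m',b)$ term independently, so the gain is additive over $b\in T$, with $v^{(-h)}$ computed from the holders fixed so far. Two further subtleties need checking. First, the order is fixed before the run, so $T^*_m$ is well defined as a fixed comparator and is independent of greedy choices. Second, the workers' element sets are disjoint, which makes the decomposition of $\bm T^*$ into blocks $T^*_m$ legitimate.
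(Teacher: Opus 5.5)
Your proposal is correct and follows essentially the same argument as the paper. Both use monotonicity to pass from $G(\bm T^*)$ to $G(S_M\cup\bm T^*)$, telescope the optimal worker blocks on top of $S_M$, and apply set-wise diminishing returns against $S_{m-1}\subseteq S_M\cup T^*_1\cup\cdots\cup T^*_{m-1}$. Both then use exactness of the rooted-tree oracle, with $T^*_m$ as a feasible candidate, to telescope the greedy marginals to $G(S_M)$. Your explicit check that the node profits give the exact marginal of every candidate parent-closed set, not just single nodes, is handled correctly; the paper leaves this to Lemma~\ref{lem:decomp}, Proposition~\ref{prop:source}, and~\eqref{eq:marginal}.
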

\begin{proof}
View each placement element as a worker--chunk pair. Let $S_i$ be the union of the first $i$ greedy worker placements in the fixed order, and let $O_i$ denote the placement assigned to worker $i$ in an optimal joint solution $\bm T^*$. Write $R_i=S_M\cup O_1\cup\cdots\cup O_i$, with $R_0=S_M$. Monotonicity gives $G(\bm T^*)\le G(R_M)$. Moreover, $S_{i-1}\subseteq R_{i-1}$, so diminishing returns yields
\[
G(R_i)-G(R_{i-1})
\le G(S_{i-1}\cup O_i)-G(S_{i-1}).
\]
Summing over $i$ gives
\begin{align*}
G(\bm T^*)
&\le G(S_M)+\sum_i\bigl[G(S_{i-1}\cup O_i)-G(S_{i-1})\bigr].
\end{align*}
When worker $i$ is processed, $O_i$ is feasible for its exact marginal oracle. The corresponding term is therefore at most $G(S_i)-G(S_{i-1})$. These greedy marginals telescope to $G(S_M)$, and hence $G(\bm T^*)\le2G(S_M)=2G(\bm T^G)$.
\end{proof}

Configuration methods based on Linear Programming (LP) give stronger worst-case factors for the requester-side separable special case~\cite{fgms06}. PrefixPlace contributes a direct rooted-tree-oracle framework that applies unchanged to both requester-side coverage and source-dependent facility-location marginals, while avoiding a global configuration LP and rounding stage.

\subsection{PrefixPlace: guaranteed coordination and refinement}
PrefixPlace strengthens the guaranteed WorkerGreedy solution through coordinate refinement: it removes one worker at a time, recomputes that worker's exact node marginals against all other placements, reruns the same DP using~\eqref{eq:marginal} or~\eqref{eq:pairmarginal}, and accepts only strictly improving replacements (tolerance $\varepsilon$ in floating point). Because the feasible state space is finite, refinement terminates at a coordinate-wise local optimum; monotone improvement preserves the starting $1/2$ guarantee.

To reduce order sensitivity, PrefixPlace refines multiple WorkerGreedy starts and returns the best candidate. For $M\le5$, it evaluates every order; otherwise, it uses the standalone-score order, its reverse, and six fixed pseudorandom permutations. For the requester-side objective, it also includes Independent, Popularity, and LocalDedup candidates, ensuring that the returned placement is never worse than these alternatives under the same objective.

Given node marginals, one requester-side greedy pass or one refinement sweep costs $O\!\left(n\sum_m k_m\right)$. For source-dependent costs, maintaining the largest and second-largest current holder values for each requester--chunk pair computes all leave-one-worker-out marginals in $O(M^2n)$ per sweep. These are per-pass bounds; the evaluated instances converge in two to three sweeps.

\begin{algorithm}[t]
\caption{\textsc{PrefixPlace} with the exact rooted-tree oracle.}
\label{alg:prefixplace}
\scriptsize
\setlength{\algorithmicindent}{0.92em}
\begin{algorithmic}[1]
\REQUIRE Tree $\T$; budgets $\{k_m\}$; demand $\{\Lambda_m\}$; profiled costs; tolerance $\varepsilon$
\ENSURE Prefix-complete placements $\bm T=(T_1,\ldots,T_M)$
\STATE \textbf{procedure} $\textsc{RootedDP}(p,k)$
\STATE Number nodes $b_1,\ldots,b_n$ in preorder and compute $\mathrm{next}(i)$
\STATE Set $D[n+1,j]\leftarrow0$ for $0\le j\le k$
\STATE Set $D[i,0]\leftarrow0$ for $1\le i\le n$
\FOR{$i=n,n-1,\ldots,1$}
  \FOR{$j=1,2,\ldots,k$}
    \STATE $s\leftarrow D[\mathrm{next}(i),j]$ \COMMENT{skip $b_i$ and its subtree}
    \STATE $t\leftarrow p(b_i)+D[i+1,j-1]$ \COMMENT{take $b_i$}
    \STATE $D[i,j]\leftarrow\max\{s,t\}$ and record the maximizing branch
  \ENDFOR
\ENDFOR
\STATE Backtrack from $D[1,k]$ to recover the parent-closed set
\STATE \textbf{return} the recovered set
\STATE \textbf{procedure} $\textsc{PrefixPlace}(\T,\{k_m\},\{\Lambda_m\},\textit{costs})$
\STATE Build $a_m,\Delta$ by~\eqref{eq:am}--\eqref{eq:delta}, or $q_{m\leftarrow h}$ by~\eqref{eq:qpair}
\STATE $\Pi\leftarrow$ all worker orders if $M\le5$
\IF{$M>5$}
  \STATE $\Pi\leftarrow$ standalone-score order, its reverse, and six fixed pseudorandom permutations
\ENDIF
\STATE $\mathcal C\leftarrow\emptyset$
\FOR{each fixed order $\pi\in\Pi$}
  \STATE $T_m\leftarrow\emptyset$ for every worker $m$
  \FOR{each worker $h$ in order $\pi$}
    \IF{requester-side costs}
      \STATE $U\leftarrow\bigcup_{\ell\ne h}T_\ell$
      \STATE $p_h(b)\leftarrow a_h(b)+\mathbf1\{b\notin U\}\Delta(b)$ for all $b$
    \ELSE
      \STATE $v_m^{(-h)}(b)\leftarrow\max_{\ell\ne h:b\in T_\ell}q_{m\leftarrow\ell}(b)$
      \STATE $p_h(b)\leftarrow\sum_m\Lambda_m(b)[q_{m\leftarrow h}(b)-v_m^{(-h)}(b)]_+$
    \ENDIF
    \STATE $T_h\leftarrow\textsc{RootedDP}(p_h,k_h)$
  \ENDFOR
  \REPEAT
    \STATE $\textit{improved}\leftarrow\textsc{false}$
    \FOR{each worker $h$ in order $\pi$}
      \STATE $T_h^{\rm old}\leftarrow T_h$; $g_{\rm old}\leftarrow G(\bm T)$; $T_h\leftarrow\emptyset$
      \IF{requester-side costs}
        \STATE $U\leftarrow\bigcup_{\ell\ne h}T_\ell$
        \STATE $p_h(b)\leftarrow a_h(b)+\mathbf1\{b\notin U\}\Delta(b)$ for all $b$
      \ELSE
        \STATE Recompute $v_m^{(-h)}(b)$ from the current holders
        \STATE $p_h(b)\leftarrow\sum_m\Lambda_m(b)[q_{m\leftarrow h}(b)-v_m^{(-h)}(b)]_+$
      \ENDIF
      \STATE $T'_h\leftarrow\textsc{RootedDP}(p_h,k_h)$
      \IF{$G(\bm T_{-h},T'_h)>g_{\rm old}+\varepsilon$}
        \STATE $T_h\leftarrow T'_h$; $\textit{improved}\leftarrow\textsc{true}$
      \ELSE
        \STATE $T_h\leftarrow T_h^{\rm old}$
      \ENDIF
    \ENDFOR
  \UNTIL{$\textit{improved}=\textsc{false}$}
  \STATE $\mathcal C\leftarrow\mathcal C\cup\{\mathrm{copy}(\bm T)\}$
\ENDFOR
\STATE Add Independent, Popularity, and LocalDedup placements to $\mathcal C$ when requester-side
\STATE \textbf{return} $\arg\max_{\bm X\in\mathcal C}G(\bm X)$
\end{algorithmic}
\end{algorithm}

\subsection{Planner operation and epoch updates}
Algorithm~\ref{alg:prefixplace} turns each epoch's tree, demand, budgets, and cost profiles into one prefix-complete target per worker. The planner maps measured profiles to $w_m(b)$ and either $f_m(b)$ or $f_{m\leftarrow h}(b)$, constructs exact node marginals, evaluates the guaranteed and refined candidates, and exports the best target set to the cache layer. Its state is limited to the prefix tree, demand arrays, worker budgets, and requester-side or pairwise cost tables.

When routing or popularity changes, the planner evaluates both the incumbent and a reoptimized target under refreshed demand. It adopts the new target only when the predicted horizon saving exceeds the one-time cost of materializing newly assigned chunks. Section~\ref{sec:eval} evaluates this amortization rule together with solver runtime.

\section{Performance Evaluation}
\label{sec:eval}

The evaluation follows the paper's claim chain through five Research Questions (RQs): (RQ1) do measured costs require worker-specific decisions; (RQ2) how close is PrefixPlace to exact and relaxed optima; (RQ3) when does coordination create value; (RQ4) do the structural predictions persist across workloads, source-dependent costs, and input perturbations; and (RQ5) can PrefixPlace replan efficiently under demand shifts?

\subsection{Experimental methodology}
\paragraph{Measured compute and KV profiles} We profile unquantized FP16 Qwen2.5 models with identical prompts and software settings. The cross-hardware study runs 3B on T4, L4, and A100; 7B on L4 and A100; and 14B and 32B on A100. Each context length uses five prompts, three warm-ups, and seven timed repetitions; incremental measurements use three prompts and five repetitions per prefix--segment pair. Compute Unified Device Architecture (CUDA) synchronization brackets every timed region, and all repetitions are retained. The stack is CUDA 12.8, PyTorch 2.8.0+cu128, and Transformers 4.55.2. Table~\ref{tab:profile} summarizes the completed context ranges, measured KV footprints, and repetition stability. Unless otherwise stated, the main experiments use 512-token placement chunks and the corresponding measured incremental curves, with linear interpolation between sampled depths. Effective KV goodputs from 0.75 to 5 Gb/s span the measured fetch/recompute boundaries; RQ4 evaluates 1024-token chunks as a granularity sensitivity check.

\paragraph{Parameterized and public prefix structures} Two reproducible families vary sharing, locality, skew, and depth independently. The RAG-shaped workload has a shared system prefix, 40 document branches of 4--6 chunks, and 10 query leaves per document, with Zipf(0.9) document and Zipf(0.6) query popularity. A shared-request fraction $\rho$ distributes that fraction of each document's requests uniformly across workers; the remainder stays at a random home worker. The multi-turn workload has 120 session chains of 3--15 chunks with Zipf(0.8) popularity and the same attachment model. Unless varied, each worker can store 10\% of tree nodes.

We also derive a public retrieval tree from all English WikiQA~\cite{wikiqa} entries (29,258 candidate rows, 3,047 questions, 2,811 document titles), which a deterministic lexical-token estimator partitions into 9,917 128-unit placement chunks; under the measured 3B FP16 KV footprint, an equal 1-GiB budget holds 227 chunks per worker. Offline placements use the exact aggregate demand of the full corpus. To evaluate the order-sensitive vLLM Automatic Prefix Caching (vLLM-APC) baseline under that same demand, each trial forms a complete 18,282-request multiset containing every worker--question pair exactly once; independent random permutations warm and evaluate the cache. A $\pm25\%$ lexical-token-scale sweep tests sensitivity to the token mapping.

\paragraph{Unified benchmark with exact optima} We evaluate 432 three-worker instances and solve every instance to exact Mixed-Integer Linear Programming (MILP) optimality. The first 216-instance factorial block uses a reduced RAG-shaped tree designed for exact solution (30 document branches of two to four chunks, three question leaves each; 175--189 chunks), crossing eight workload seeds, three request-sharing levels, three transfer-goodput settings, and three worker configurations (mixed T4/L4/A100, three L4, or three A100; $8\times3\times3\times3=216$). The second block (22 branches of two to nine chunks, three leaves each; 183 chunks; mixed T4/L4/A100) crosses three levels of popularity-to-depth mismatch, two demand-skew levels, two budgets, three sharing levels, two goodput settings, and three seeds ($3\times2\times2\times3\times2\times3=216$). At stronger mismatch levels, popular documents are assigned more often to shallower branches, making demand alone less predictive of the materialization cost saved by placement. All exact quality statistics below are computed over the pooled 432-instance benchmark.

\paragraph{Comparators and reporting metric} We compare PrefixPlace with four baselines. \emph{vLLM-APC}~\cite{vllm,vllm-apc-docs} replays vLLM's per-worker block-level Least Recently Used (LRU) prefix cache on the routed request stream. \emph{Independent} runs the exact rooted-tree oracle per worker on local demand weighted by recomputation cost, ignoring cross-worker reuse value; \emph{Popularity} uses local demand alone; \emph{LocalDedup} additionally discounts chunks already covered by earlier workers but not the requester-wide value created by the first copy. The requester-side placement comparator is the largest-$F$ result among Independent, Popularity, and LocalDedup; source-dependent experiments use the strongest of mean-rate, mean-transfer-cost, best-source-cost, and Independent variants. For the RAG request-stream replay, a 30,000-request epoch warms APC and profiles the planners, and an independent 30,000-request epoch evaluates all methods across five workload $\times$ five arrival-order seeds. For WikiQA, the offline methods use exact aggregate demand, while vLLM-APC uses 25 independent warm/evaluation permutation pairs of the complete worker--question request multiset. Saving over all-recompute is the profiled recomputation cost avoided by local hits or planned local/remote reuse. Because all cost inputs are measured on the profiled GPUs, the reported objective is denominated in milliseconds of avoided recomputation and transfer per epoch, and gains translate directly to materialization-time reductions on the corresponding hardware. For PrefixPlace placement $P$ and comparator $B$, $\operatorname{gain}(P,B)=100[G(P)-G(B)]/G(B)$, where $G$ is $F$ or $F_{\rm pair}$. Comparisons against exact MILP solutions and tree-aware LP bounds separately assess absolute solution quality.

\paragraph{Exact solvers and demand transitions} The unified requester-side benchmark uses a MILP with binary placement $x_{m,b}$ and coverage $y_b$; workload-scale checkpoints use its LP relaxation:
\begin{align}
\max\;&\sum_{m,b}a_m(b)x_{m,b}+\sum_b\Delta(b)y_b \label{eq:lp}\\
\text{s.t. }&\sum_bx_{m,b}\le k_m,\quad x_{m,b}\le x_{m,\mathrm{par}(b)},\nonumber\\
&y_b\le\sum_mx_{m,b},\quad 0\le x_{m,b},y_b\le1.\nonumber
\end{align}
SciPy 1.17.0 and HiGHS 1.8.0 solve both formulations. The source-dependent MILP adds assignment variables $z_{m,h,b}$ with $z_{m,h,b}\le x_{h,b}$ and $\sum_hz_{m,h,b}\le1$, exactly linearizing~\eqref{eq:pair}. For replanning, 100 transitions span both parameterized workloads, five seeds, 1.0/1.5-Gb/s goodput, six mixed-GPU workers, and 10\% budgets. We shift 25/50/100\% of routing mass or perturb popularity with log-normal $\sigma=0.5/1.0$. Break-even requests equal the profiled one-time cost of newly assigned chunks divided by per-request objective gain.

\begin{table}[t]
\centering
\caption{Measured GPU--model profiles. The median Coefficient of Variation (CV) summarizes repeated full-prefill measurements.}
\label{tab:profile}
\footnotesize
\setlength{\tabcolsep}{5.0pt}
\begin{tabular}{@{}lrrr@{}}
\toprule
Profile & Context (tokens) & KV bytes/token & Median CV\\
\midrule
T4-3B & 128--7168 & 36,864 & 0.51\%\\
L4-3B & 128--8192 & 36,864 & 0.42\%\\
L4-7B & 128--8192 & 57,344 & 0.35\%\\
A100-3B & 128--8192 & 36,864 & 0.27\%\\
A100-7B & 128--8192 & 57,344 & 0.25\%\\
A100-14B & 128--8192 & 196,608 & 0.17\%\\
A100-32B & 128--4096 & 262,144 & 0.16\%\\
\bottomrule
\end{tabular}
\end{table}

\subsection{RQ1: do measured profiles require worker-specific decisions?}
Table~\ref{tab:profile} first establishes the coverage and stability of the measured profiles; median CV is at most 0.51\% for every GPU--model pair. Figure~\ref{fig:hw} then holds model, representation, prompts, and effective KV goodput fixed and shows the decision consequence: T4, L4, and A100 fall into fetch-dominated, crossover, and recompute-dominated regimes. For Qwen2.5-7B, the measured 512-token recomputation curve is approximately $170.2+0.0100p$ ms on L4 and $48.1+0.0050p$ ms on A100, where $p$ is existing-prefix length. At 3 Gb/s, the same 29.4-MB chunk favors fetching on L4 but crosses near 6.1K tokens on A100. For A100-32B, the 134.2-MB chunk crosses near 2.7K tokens at 5 Gb/s. Thus, hardware identity changes the preferred miss action even when the model and network condition are held fixed.

\subsection{RQ2: how close is PrefixPlace to optimum?}
\paragraph{Exact optimum} Figure~\ref{fig:exact} scores vLLM-APC and the principal offline methods under the exact MILP objective on the unified 432-instance benchmark, with 175--189 chunks per instance. Per instance, five independent 30,000-request streams warm vLLM-APC; each frozen cache state, restricted to its usable prefix-complete blocks, is scored under the same objective. Averaged over the five orders, vLLM-APC reaches 78.63\% of optimum with a 65.62\% minimum. PrefixPlace averages 99.84\% and never falls below 98.02\%. These results demonstrate that PrefixPlace remains consistently near-optimal across the unified benchmark.

\begin{figure}[t]
\centering
\includegraphics[width=.80\linewidth]{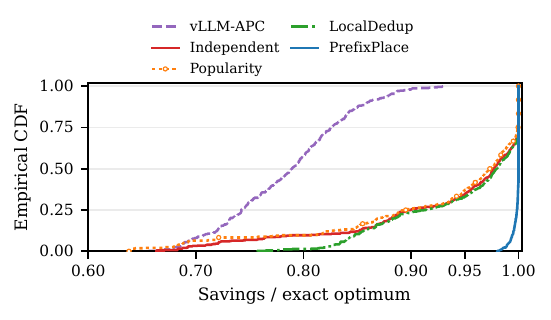}
\caption{Empirical Cumulative Distribution Function (CDF) of objective ratios to the exact MILP optimum across the unified 432-instance benchmark, with 175--189 chunks per instance. vLLM-APC freezes each replayed cache state and is evaluated under the same objective.}
\label{fig:exact}
\end{figure}

\paragraph{Component ablation} Table~\ref{tab:ablation} isolates how complete PrefixPlace reaches this quality: WorkerGreedy averages 98.15\% of optimum; coordinate refinement improves 124 instances (mean 98.93\%, minimum 82.33\% to 85.76\%); order-diverse starts improve another 161 (mean 99.84\%, minimum 97.55\%); and the comparator safeguards lift the minimum to 98.02\%. The global minimum at every stage lies in the popularity-to-depth-mismatch block, the difficult tail of the benchmark, which refinement, order diversity, and safeguards repair while keeping every instance within 1.98\% of the exact optimum.

\begin{table}[t]
\centering
\caption{Cumulative component ablation on the unified 432-instance exact benchmark.}
\label{tab:ablation}
\footnotesize
\setlength{\tabcolsep}{3.5pt}
\begin{tabular}{@{}lrrr@{}}
\toprule
Variant & Mean/Opt. & Min/Opt. & Exact\\
\midrule
One-pass WorkerGreedy & 98.15\% & 82.33\% & 174/432\\
+ coordinate refinement & 98.93\% & 85.76\% & 196/432\\
+ order-diverse starts & 99.84\% & 97.55\% & 246/432\\
+ comparator safeguards & 99.84\% & 98.02\% & 246/432\\
\bottomrule
\end{tabular}
\end{table}

\paragraph{Broader exact and relaxed benchmarks} Table~\ref{tab:quality} consolidates this result with source-dependent exact instances and workload-scale LP bounds. Across 45 source-dependent MILP instances, PrefixPlace averages 99.62\% of optimum and never falls below 97.90\%. The 40 tree-aware LP checkpoints (three T4, three L4, three A100, or mixed T4/L4/A100 workers; two sharing levels; five seeds; $4\times2\times5=40$) yield 99.66\% of the LP upper bound on average, never below 98.71\%, with the 10-checkpoint mixed-GPU subset at 99.55\% and never below 99.07\%. Because the LP relaxation retains capacity and parent closure and upper-bounds the integral optimum, each LP ratio is a valid lower bound on PrefixPlace's ratio to it.

\begin{table}[t]
\centering
\caption{Solution quality against exact optima and LP upper bounds.}
\label{tab:quality}
\scriptsize
\setlength{\tabcolsep}{3.2pt}
\begin{tabular}{@{}lrr@{}}
\toprule
Setting & Mean ratio & Minimum ratio\\
\midrule
Requester-side exact (432) & 99.84\% & 98.02\%\\
Source-dependent exact (45) & 99.62\% & 97.90\%\\
Tree-aware LP (40) & 99.66\% & 98.71\%\\
Mixed-GPU LP subset (10) & 99.55\% & 99.07\%\\
\bottomrule
\end{tabular}
\end{table}

\subsection{RQ3: when does coordination create value?}
Figure~\ref{fig:phase}(a) fixes full request sharing and the default 10\% worker budget, then sweeps effective KV goodput in the mixed T4/L4/A100 RAG-shaped workload. Every point is a five-seed mean with a 95\% confidence interval. PrefixPlace improves over the strongest evaluated placement baseline throughout the measured range: gain rises from 1.1\% at 0.75 Gb/s to 6.2\% at 1.5 Gb/s and remains 8.1--8.9\% from 2 to 5 Gb/s. The peak mean is 8.9\% at 3 Gb/s, and the maximum over the complete 600-setting RAG sweep is 10.0\%.

\begin{figure}[t]
\centering
\includegraphics[width=.98\linewidth]{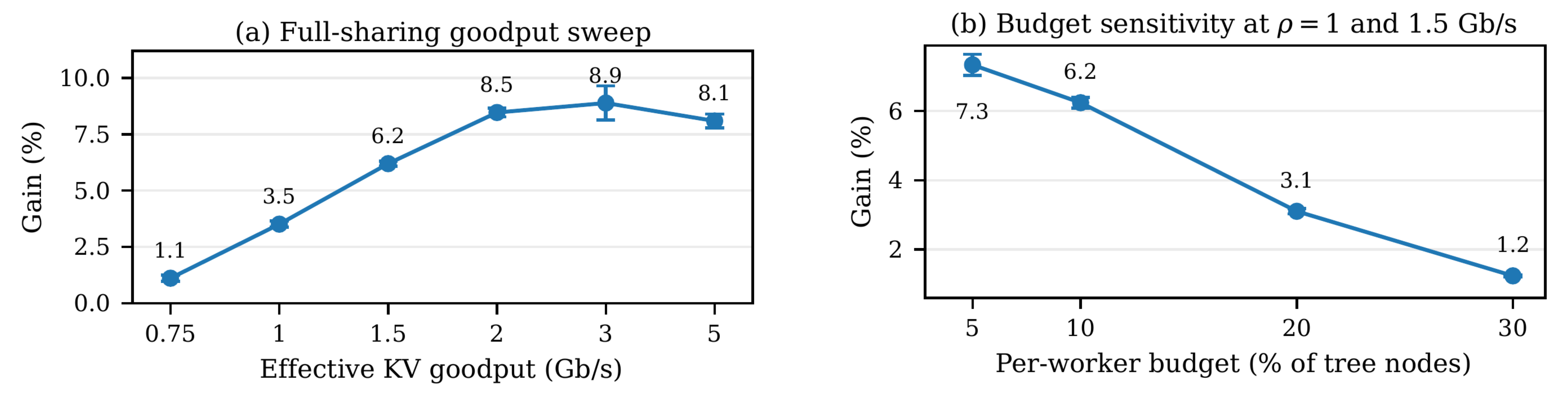}
\caption{Value unlocked by coordination over the strongest evaluated placement baseline. (a) Full-sharing effective-KV-goodput sweep at a 10\% worker budget. (b) Gain as per-worker budget varies under full sharing and 1.5-Gb/s effective KV goodput. Error bars show 95\% confidence intervals; numeric labels report mean gain.}
\label{fig:phase}
\end{figure}

Figure~\ref{fig:phase}(b) isolates capacity. Mean gains are 7.34\%, 6.24\%, 3.10\%, and 1.24\% at 5\%, 10\%, 20\%, and 30\% budgets, respectively. Scarce capacity makes redundant replicas more expensive, so coordination helps most precisely where placement choices are consequential; the candidate safeguard retains the strongest evaluated alternative under the same objective.

\subsection{RQ4: do the structural predictions persist across workloads and costs?}
\paragraph{RAG request replay} Figure~\ref{fig:public}(a) reports PrefixPlace's paired relative gain over each comparator on the RAG request replay. Across 25 paired runs, PrefixPlace improves saving by 40.29\% over vLLM-APC, 8.59\% over both Independent and Popularity (unrounded 8.588\% and 8.595\%, reflecting their nearly identical saving), and 6.30\% over LocalDedup, and every paired run favors PrefixPlace.

\begin{figure}[t]
\centering
\includegraphics[width=.99\linewidth]{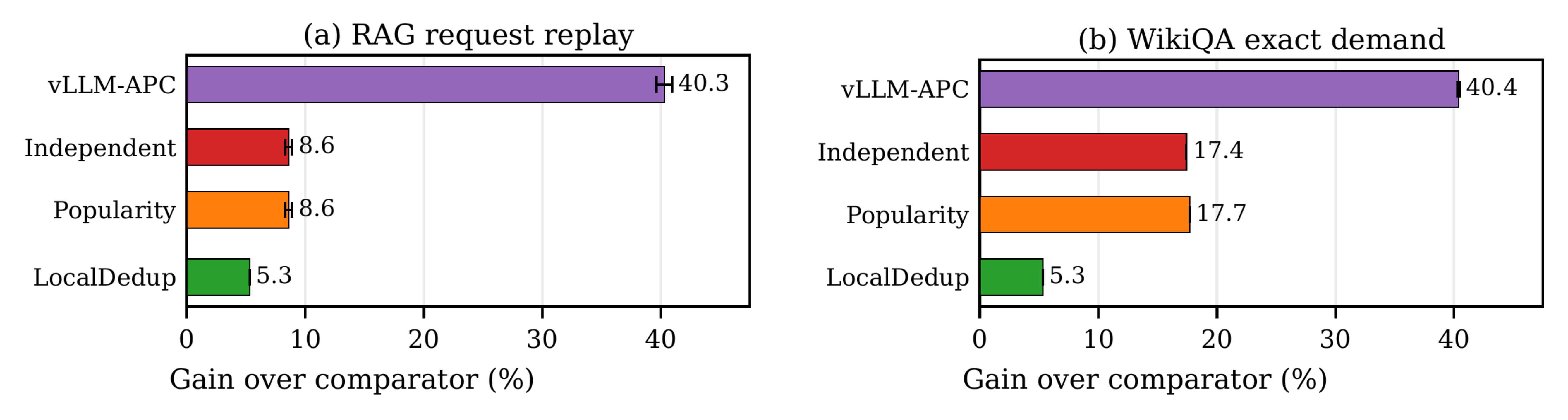}
\caption{PrefixPlace's relative gain over each comparator. (a) RAG request replay: means and 95\% confidence intervals over 25 paired 30,000-profile/30,000-evaluation runs. (b) WikiQA exact-demand evaluation: offline bars use the full aggregate demand; the APC bar averages 25 independent warm/evaluation permutations of the complete worker--question request multiset, preserving that same demand.}
\label{fig:public}
\end{figure}

\paragraph{WikiQA exact-demand evaluation} Figure~\ref{fig:public}(b) uses the exact aggregate demand of the full WikiQA workload. PrefixPlace improves saving by 40.41\% over vLLM-APC, 17.43\% over Independent, 17.72\% over Popularity, and 5.31\% over LocalDedup. For vLLM-APC, the 40.41\% mean is computed over 25 independent order pairs and has a 95\% confidence interval of $\pm0.10$ percentage points; each replay preserves the same complete worker--question demand used by the offline methods. The gain over the strongest offline baseline remains 4.99--6.69\% under a $\pm25\%$ lexical-token-scale sweep.

\paragraph{Source-dependent transfer costs} Figure~\ref{fig:topology} evaluates whether holder identity changes placement value at shared-request fractions $\rho\in\{0.75,1.0\}$. Two groups each contain one T4, L4, and A100 worker, with 10-Gb/s within-group goodput and 1--3-Gb/s between-group goodput; source-aware PrefixPlace uses the full $f_{m\leftarrow h}$ table against the strongest source-oblivious variant under true costs. At 1 Gb/s between groups, mean gain is 5.8\% for $\rho=0.75$ and 7.4\% for $\rho=1.0$ (maximum 8.1\% over all 30 settings); at 3 Gb/s, gains remain 2.0\% and 1.8\%, so source identity matters beyond requester-only summaries.

\begin{figure}[t]
\centering
\includegraphics[width=.80\linewidth]{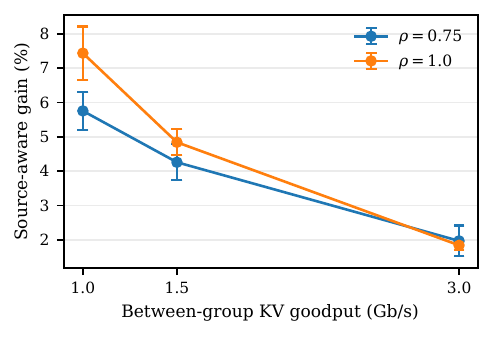}
\caption{Source-aware gain over the strongest evaluated source-oblivious variant at $\rho=0.75$ and $\rho=1.0$ as between-group effective KV goodput varies (mean $\pm1.96$ standard errors across five seeds).}
\label{fig:topology}
\end{figure}

\setcounter{table}{6}
\begin{table*}[!t]
\centering
\caption{Optimization scope of representative KV-reuse methods.}
\label{tab:scope}
\fontsize{6.0pt}{6.7pt}\selectfont
\setlength{\tabcolsep}{3.2pt}
\renewcommand{\arraystretch}{0.86}
\begin{tabularx}{\textwidth}{@{}>{\raggedright\arraybackslash}p{0.11\textwidth}>{\raggedright\arraybackslash}p{0.20\textwidth}>{\raggedright\arraybackslash}p{0.22\textwidth}>{\raggedright\arraybackslash}X@{}}
\toprule
Work & Primary decision & KV object / scope & Distinction from PrefixPlace \\
\midrule
vLLM~\cite{vllm}/\newline SGLang~\cite{sglang} & Local cache organization and execution & Exact-prefix blocks / radix tree & Provide worker-local exact-prefix reuse; vLLM-APC is our replayed baseline. PrefixPlace plans coordinated cross-worker targets. \\
\addlinespace[1.5pt]
Preble~\cite{preble} & Request-to-worker assignment & Prefix locality in existing worker caches & Optimizes routing; PrefixPlace plans prefix-complete placement from an epoch demand matrix. \\
\addlinespace[1.5pt]
Mooncake~\cite{mooncake}/\newline IMPRESS~\cite{impress2025}/\newline CacheGen~\cite{cachegen} & Transfer, storage, and tier management & Streamed, disaggregated, or multi-tier prefix KV & Provide transfer mechanisms and costs; PrefixPlace selects coordinated placement targets. \\
\addlinespace[1.5pt]
RAGCache~\cite{ragcache}/\newline HotPrefix~\cite{hotprefix}/\newline UniCache~\cite{unicache} & Hierarchy scheduling or eviction & RAG knowledge trees or worker-local prefix caches & Optimize eviction or hierarchy decisions rather than coordinated prefix-complete placement. \\
\addlinespace[1.5pt]
CacheBlend~\cite{cacheblend} & Request-local fused KV state & Non-prefix retrieved RAG chunks & Recomputes within a request rather than coordinating exact-prefix placement. \\
\addlinespace[1.5pt]
SemCache~\cite{semcache2026} & Semantic-aware cache sharing & Multi-user inference with Low-Rank Adaptation (LoRA) at the edge & Coordinates semantic sharing rather than prefix-complete exact-prefix placement. \\
\addlinespace[1.5pt]
\textbf{PrefixPlace} & \textbf{Prefix-complete target per worker} & \textbf{Exact-prefix placement chunks} & \textbf{Optimizes coordinated placement under profiled compute and transfer costs.} \\
\bottomrule
\end{tabularx}
\end{table*}
\setcounter{table}{3}

\paragraph{Robustness analysis} Table~\ref{tab:robustness} evaluates placements computed from perturbed inputs under the unperturbed calibrated objective, perturbing each profiled or demand value independently as $\widehat c=c\exp(\epsilon)$, $\epsilon\sim\mathcal N(0,\sigma^2)$. Even at $\sigma=0.30$, the fifth percentile retains at least 98.74\% of calibrated-placement saving.

\begin{table}[t]
\centering
\caption{Robustness analysis under profile and demand perturbations.}
\label{tab:robustness}
\scriptsize
\setlength{\tabcolsep}{3.8pt}
\begin{tabular}{@{}lrr@{}}
\toprule
Perturbation & Mean retained & Fifth percentile\\
\midrule
Profile noise $\sigma=0.20$ & 99.70\% & 99.23\%\\
Profile noise $\sigma=0.30$ & 99.34\% & 98.75\%\\
Demand noise $\sigma=0.30$ & 99.36\% & 98.74\%\\
\bottomrule
\end{tabular}
\end{table}

\paragraph{Sensitivity analysis} Table~\ref{tab:sensitivity} varies workload structure, demand skew, worker order, placement granularity, and request-stream horizon: the multi-turn workload retains measurable coordination value, increasing Zipf concentration reduces new coverage opportunities, coordinate refinement nearly removes order sensitivity, doubling the chunk size preserves the coordination pattern, and the advantage over vLLM-APC persists across replay horizons of 3,000 to 30,000 requests.

\begin{table}[t]
\centering
\caption{Sensitivity analysis across workload and algorithm settings.}
\label{tab:sensitivity}
\scriptsize
\setlength{\tabcolsep}{2.6pt}
\begin{tabular}{@{}p{0.25\linewidth}p{0.68\linewidth}@{}}
\toprule
Variation & Result\\
\midrule
Multi-turn structure & 45 settings: 1.83\% mean and 4.93\% maximum gain\\
Zipf exponent $0.5\!\rightarrow\!1.5$ & Mean gain decreases from 13.53\% to 2.27\%\\
100 worker orders & Refinement adds 4.12\%; CV falls from 0.98\% to 0.09\%\\
1024-token chunks & 27 settings: 3.00\% mean and 11.01\% maximum gain\\
RAG replay horizon, 3,000--30,000 requests & vs. APC: 38.91--40.54\%; vs. best offline: 4.50--6.26\%\\
\bottomrule
\end{tabular}
\end{table}

\begin{table}[t]
\centering
\caption{Reoptimization under refreshed demand. Gain and BE statistics are computed over beneficial updates; P90 denotes the 90th percentile.}
\label{tab:epoch-update}
\scriptsize
\setlength{\tabcolsep}{2.5pt}
\begin{tabular}{@{}lrrrr@{}}
\toprule
Demand change & Improves & Gain & Median BE & P90 BE\\
\midrule
Popularity $\sigma=0.5$ & 20/20 & 0.87\% & 1,202 & 2,442\\
Popularity $\sigma=1.0$ & 20/20 & 3.84\% & 482 & 854\\
Routing 25\% & 15/20 & 0.61\% & 1,637 & 7,473\\
Routing 50\% & 20/20 & 3.46\% & 909 & 1,113\\
Routing 100\% & 20/20 & 23.74\% & 251 & 390\\
\bottomrule
\end{tabular}
\end{table}

\subsection{RQ5: can PrefixPlace replan efficiently under demand shifts?}
\paragraph{Placement updates} Table~\ref{tab:epoch-update} reports reoptimization gains and Break-Even (BE) requests. Reoptimization improves 95 of 100 transitions, with the amortization rule retaining the incumbent in five mild 25\% routing shifts; across beneficial updates, median and 90th-percentile BE are 769 and 2,346 requests, 96.8\% break even within 5,000 and all within 10,000, and complete routing shifts yield 23.74\% median gain with a 251-request median BE.

\paragraph{Solver runtime} Figure~\ref{fig:scale} measures CPU time of the coordinate-refined solver as tree size and worker count increase, with per-worker budgets given in the caption. On an EPYC 9V74 CPU, the $n=10{,}000$, $k=128$ configuration takes 0.41, 1.12, and 2.35 s for 4, 8, and 16 workers, and the largest configuration ($n=50{,}000$, $M=16$, $k=128$) completes in 12.29 s; refinement converges in two to three sweeps.

\begin{figure}[t]
\centering
\includegraphics[width=.80\linewidth]{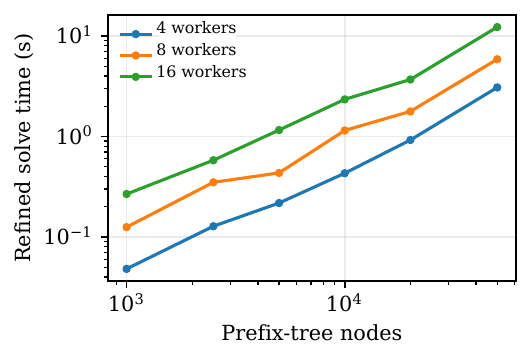}
\caption{CPU time of the coordinate-refined solver (log scale) as tree size $n$ and worker count $M$ increase. Each worker receives $k=\min(128,\max(16,\lfloor n/20\rfloor))$ chunks; the $n=50{,}000$, $M=16$, $k=128$ case completes in 12.29 s.}
\label{fig:scale}
\end{figure}

\section{Related Work}
\label{sec:related}

Table~\ref{tab:scope} positions PrefixPlace against representative KV-reuse decisions, including local reuse, transfer and tiering, hierarchy or eviction, and request routing.

\noindent\textbf{Prefix reuse, transfer, and eviction.}
vLLM~\cite{vllm} provides hash-based Automatic Prefix Caching with block-level LRU eviction~\cite{vllm-apc-docs}, SGLang~\cite{sglang} radix-tree exact-prefix reuse, and PromptCache~\cite{promptcache} modular attention reuse. CachedAttention~\cite{cachedattention}, CacheGen~\cite{cachegen}, Mooncake~\cite{mooncake}, and IMPRESS~\cite{impress2025} optimize KV movement across device, network, or storage paths; RAGCache~\cite{ragcache}, HotPrefix~\cite{hotprefix}, UniCache~\cite{unicache}, and workload characterization~\cite{kvcachewild2025} study hierarchy or eviction; CacheBlend~\cite{cacheblend} reuses non-prefix RAG chunks, DroidSpeak~\cite{droidspeak} shares state across model variants, and SemCache~\cite{semcache2026} coordinates semantic sharing for Low-Rank Adaptation (LoRA)-based edge inference.

\noindent\textbf{Routing and cooperative placement.}
Preble~\cite{preble} routes requests toward existing prefix locality; PrefixPlace treats the routing-induced demand matrix as input and plans prefix-complete placement. Classical paging~\cite{sleator-tarjan} and Landlord-style file caching~\cite{landlord} assume placement-independent object costs, while FemtoCaching~\cite{femtocaching} captures cooperative placement. The present problem couples a rooted-tree feasible set at each worker with requester-specific recomputation and transfer value.

\noindent\textbf{Optimization tools.}
Tree knapsack~\cite{johnson-niemi} and submodular maximization~\cite{nwf78} provide related techniques. Maximum separable assignment~\cite{fgms06} supplies stronger configuration-LP guarantees for the requester-side separable special case. PrefixPlace instead develops one direct rooted-tree-oracle algorithm for both requester-side coverage and source-dependent facility-location marginals; monotone refinement then produces near-optimal solutions at the evaluated scales.

\section{Conclusions and Future Work}
\label{sec:conclusion}
Reusable prefix KV states create a cost-aware placement problem beyond hit-rate policies, which PrefixPlace addresses with modular-plus-coverage and facility-location formulations, an exact $O(nk)$ rooted-tree oracle, a coordinated $1/2$-approximation, and monotone refinement. Across 432 requester-side and 45 source-dependent instances solved to exact MILP optimality, it attains 99.84\% and 99.62\% of optimum on average, never below 98.02\% and 97.90\%, respectively, and improves materialization-cost saving by 40.3\% over vLLM-APC and 6.3\% over the best offline baseline on RAG replays, with consistent gains on WikiQA, source-aware topologies, and demand shifts. A 50,000-node, 16-worker placement solves in 12.3 s on one CPU, and beneficial updates break even within a median of 769 requests. Future work includes co-optimizing request routing with placement and extending the epoch-level model to online placement with regret guarantees.

\bibliographystyle{IEEEtran}
\bibliography{Bibliography}

\end{document}